\newtheorem{theorem}{Theorem}
\newtheorem{lemma}{Lemma}
\newtheorem{definition}{Definition}
\begin{document}

\title{A New Metric and Its Scheme Construction for Evolving $2$-Threshold Secret Sharing Schemes}

\author{Wei Yan,
        Sian-Jheng Lin,~\IEEEmembership{Member,~IEEE}
        \thanks{Yan is with the School of Cyber Science and Technology, University of Science and Technology of China (USTC), China, e-mail: yan1993@mail.ustc.edu.cn.}
        \thanks{Lin is with the Theory Lab, Cenrtal Research Institute, 2012 Labs, Huawei Technology Co. Ltd, Hong Kong, China, email:lin.sian.jheng1@huawei.com}
}

\maketitle
\begin{abstract}
Evolving secret sharing schemes do not require prior knowledge of the number of parties $n$ and $n$ may be infinitely countable.	
It is known that the evolving $2$-threshold secret sharing scheme and prefix coding of integers have a one-to-one correspondence.
However, it is not known what prefix coding of integers to use to construct the scheme better.
In this paper, we propose a new metric $K_{\Sigma}$ for evolving $2$-threshold secret sharing schemes $\Sigma$.
We prove that the metric $K_{\Sigma}\geq 1.5$ and construct a new prefix coding of integers, termed $\lambda$ code, 
to achieve the metric $K_{\Lambda}=1.59375$.
Thus, it is proved that the range of the metric $K_{\Sigma}$ for the optimal 
$(2,\infty)$-threshold secret sharing scheme is $1.5\leq K_{\Sigma}\leq1.59375$.
In addition, the reachable lower bound of the sum of share sizes for $(2,n)$-threshold secret sharing schemes is proved.
\end{abstract}

\section{Introduction}
The secret sharing scheme was first proposed independently by Shamir~\cite{SS1} and Blakley~\cite{SS2} in 1979.
Its application scenario is that there is sensitive information that needs to be safely stored, 
so the sensitive information is first divided into $n$ shares, and each share is assigned to a party.
When assigning shares, some specific subsets of $n$ parties are set as qualified subsets. 
This sensitive information can only be recovered when the parties in the qualified subsets appear together. 
Conversely, when $m$ parties cannot form a qualified subset, 
they cannot recover the sensitive information, or even obtain any relevant content about the sensitive information. 
Secret sharing has been applied in widespread applications, such as verifiable signature sharing~\cite{VSS},
threshold digital signatures~\cite{TC1,TC2} and electronic voting~\cite{EV99}.

The secret sharing schemes first proposed by Shamir~\cite{SS1} and Blakley~\cite{SS2} are $(t,n)$-threshold secret sharing, which means that any $t$ parties among $n$ parties are a qualified subset.
The general secret sharing schemes were introduced by Ito \emph{et al.}~\cite{GSS}.
Traditional secret sharing schemes assume that the number of parties $n$ is known in advance 
or that the upper bound of $n$ is estimated.
However, this assumption comes at a cost: 
when $n$ is estimated to be too small, secret sharing needs to be re-made; 
when $n$ is estimated to be too large, it may cause waste.

Recently, the evolving secret sharing scheme was introduced by Komargodski \emph{et al.}~\cite{ESS0,ESS1}.
This new variant of secret sharing does not require prior knowledge of the upper bound of $n$ 
and $n$ may be infinitely countable. 
Komargodski \emph{et al.} found a one-to-one correspondence between the evolving $2$-threshold secret sharing scheme and  prefix coding of integers. 
In 2018, D'Arco \emph{et al.}~\cite{ESS2} reinterpreted the equivalence between the evolving $2$-threshold secret sharing scheme and prefix coding of integers with a new perspective. 
In 2020, Okamura and Koga~\cite{ESS3} extended the shared secret from $1$-bit to any $l$-bit based on the work of D'Arco \emph{et al.}, and by combining Shamir's secret sharing scheme, 
proposed an evolving 2-threshold secret sharing scheme using $D$-ary prefix codes.
In addition, there are studies on evolving secret sharing schemes for dynamic thresholds and robustness~\cite{RESS}, 
probabilistic evolving secret sharing~\cite{PESS} and evolving ramp secret sharing~\cite{ERSS1,ERSS2}.

Although the evolving 2-threshold secret sharing scheme can be completely characterized by the prefix coding of integers, 
we do not know which prefix code should be chosen to construct the secret sharing scheme better.
We need a metric for judging which evolving 2-threshold secret sharing scheme constructed by the prefix coding of integers works better.

In this paper, we investigate which prefix coding of integers should be chosen to construct 
the evolving 2-threshold secret sharing scheme better.
The metric for evolving 2-threshold secret sharing proposed in this paper draws on 
the metric minimum expansion factor $K_{\sigma}^{*}$~\cite{yan,yan2} for universal coding of integers~\cite{Elias75}. 
The contributions of this paper are enumerated as follows.
\begin{enumerate}
\item The reachable lower bound of the sum of share sizes for $(2,n)$-threshold secret sharing schemes is proved.
\item A new metric for evolving $2$-threshold secret sharing schemes is proposed.
\item Under the new metric, evolving $2$-threshold secret sharing schemes whose effect is close to the optimal scheme are constructed.
\end{enumerate}

The paper is structured as follows. 
Section \ref{pre} introduces some background knowledge.
Section \ref{bound} proves the reachable lower bound of the sum of share sizes for $(2,n)$-threshold secret sharing schemes.
Section \ref{metric} proposes a new metric for evolving $2$-threshold secret sharing schemes.
Section \ref{sec_new} constructs evolving $2$-threshold secret sharing schemes whose effect is close to the optimal scheme under the new metric.
Section \ref{sec_con} concludes this work.

\section{Preliminaries}\label{pre}
We establish some necessary notations. 
Let $\mathbb{N}$ be the set of positive integers.
Let $\mathbb{B}:=\{0,1\}$, and let $\mathbb{B}^{*}$ be a set consisting of all finite-length binary strings.
$\#S$ denotes the cardinality of the set $S$.
$|\alpha|$ denotes the length of string $\alpha$.
For a positive integer $n$, let $[n]:=\{1,2,\dots,n\}$.
\subsection{Secret Sharing Scheme}
Let $\mathcal{P}=\{P_{1},P_{2},\dots,P_{n}\}$ denote the set of participants, 
and let $2^\mathcal{P}$ be the power set of the set $\mathcal{P}$.
$\mathcal{M}\subseteq 2^\mathcal{P}$ is called \emph{monotone} 
if for any $S_1\in \mathcal{M}$, and $S_1\subseteq S_2$ it holds that $S_2\in \mathcal{M}$.
Before defining secret sharing scheme, we first define the access structure as follows.
\begin{definition} \label{def1}
$\mathcal{M}\subseteq 2^\mathcal{P}$ is said to be an access structure if $\mathcal{M}$ is non-empty and monotone.
Elements in $\mathcal{M}$ are called qualified, and elements not in $\mathcal{M}$ are called unqualified.
\end{definition}
\begin{definition}\label{def2}
Let $t$ and $n$ both be positive integers, and $1\leq t\leq n$. 
The $(t,n)$-threshold access structure $\mathcal{M}$ refers to the set containing only all elements in $2^\mathcal{P}$ of size at least $t$, i.e.
\[
\mathcal{M}=\{ A\in 2^{\mathcal{P}} \,|\,  \#A \geq t \}.
\]
\end{definition}
The definition of secret sharing scheme is given based on the access structure.
\begin{definition} \label{def3}
A secret sharing scheme	$\Sigma$ for an access structure $\mathcal{M}$ consists of a pair of probabilistic algorithms
$(\mathcal{S},\mathcal{R})$.
The sharing algorithm $\mathcal{S}$ generates $n$ shares $sh_{P_1}^{(s)},sh_{P_2}^{(s)},\dots,sh_{P_n}^{(s)}$
according to the secret $s\in S$ and the number of participants $n$.
The recovery algorithm $\mathcal{R}$ outputs a string according to the shares of the subset $A\in2^\mathcal{P}$.
The algorithm is required to satisfy:
\begin{itemize}
\item \textbf{Correctness:} For every qualified set $A\in\mathcal{M}$ and any secret $s\in S$, the recovery algorithm $\mathcal{R}$ can recover the secret $s$ with probability $1$, i.e.
	\[
		Pr[\mathcal{R}(A,\{sh_{j}^{(s)}\}_{j\in A})=s]=1.
	\]
\item  \textbf{Secrecy:} Each unqualified set $A\notin \mathcal{M}$ does not get any information about the secret $s$;
that is, for any two different secrets $s_1,s_2\in S$ and each unqualified set $A\notin \mathcal{M}$, it holds that
the distributions $(\{sh_{j}^{(s_1)} \}_{j\in A})$ and $(\{sh_{j}^{(s_2)}\}_{j\in A})$ are the same.
\end{itemize}
\end{definition}
From Definition~\ref{def3},
it can be reasonably associated that $(t,n)$-threshold secret sharing refers to the secret sharing scheme of $(t,n)$-threshold access structure. 
For a secret sharing scheme, it is hoped to generate the sum of share sizes $\sum_{i=1}^{n}|sh_{P_i}^{(s)}|$ is as small as possible, which can make the amount of communication as small as possible.

The following introduces two important conclusions that need to be used.
\begin{theorem}~\cite{UR,ESS4} \label{thm2}
	Suppose that $\Sigma$ is $(t,n)$-thresholded secret sharing scheme for $1$-bit secret, and the $j$-th share size is $m_{j}$ bits, where $2\leq t\leq n$, $j\in[n]$ and $m_j\in \mathbb{N}$. Then, the sum of share sizes 
	\[
	\sum_{j=1}^{n}m_{j} \geq n\log_{2}(n-t+2).
	\]
	In particular, when $t=2$, the sum of share sizes for $(2,n)$-threshold secret sharing schemes
	$$\sum_{j=1}^{n}m_{j}\geq n\log_{2}n.$$
\end{theorem}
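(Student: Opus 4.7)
The plan decomposes into a base case $(t=2)$ and a reduction.

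\emph{Base case.} For any $(2,m)$-threshold scheme on a $1$-bit secret I would prove $\sum_{j=1}^{m} m_j \geq m\log_2 m$. Let $T_j\subseteq\{0,1\}^{m_j}$ be the $j$-th share alphabet and $W_s\subseteq T_1\times\cdots\times T_m$ the support of the share vector under secret $s\in\{0,1\}$. The scheme axioms translate to (i) $\pi_{ij}(W_0)\cap\pi_{ij}(W_1)=\emptyset$ for all $i\neq j$ --- equivalently, the Hamming distance between any $w\in W_0$ and $w'\in W_1$ is at least $m-1$; and (ii) $\pi_j(W_0)=\pi_j(W_1)=T_j$ for every $j$ (secrecy of a single share). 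The goal is to show $|T_j|\geq m$ for every $j$, so that $\sum m_j\geq\sum\log_2|T_j|\geq m\log_2 m$. Pick any $w\in W_0$; by (ii), for each $k\in[m]$ there is a witness $w^{(k)}\in W_1$ with $w^{(k)}_k=w_k$, and by (i) this witness agrees with $w$ only at coordinate $k$. For a target coordinate $j$, the $m$ values $\{w^{(k)}_j:k\in[m]\}$ all lie in $T_j$, with $w^{(j)}_j=w_j$ and $w^{(k)}_j\neq w_j$ for $k\neq j$. The crux is to argue these are pairwise distinct, which I would handle by reselecting witnesses from the nonempty set $\{w'\in W_1:w'_k=w_k\}$ whenever two collide on coordinate $j$, iterating with alternative base tuples $w\in W_0$ as needed.

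\emph{Reduction.} For $t\geq 3$, fix any $(t-2)$-subset $B\subseteq[n]$ and any positive-probability assignment $x_B$. Since every set of at most $t-1$ shares is jointly independent of $S$, conditioning on $X_B=x_B$ preserves the threshold structure on $[n]\setminus B$: any single remaining share together with $x_B$ is $t-1$ shares (independent of $S$), while any pair together with $x_B$ is $t$ shares (determines $S$, giving secret-disjoint conditional supports). The conditional scheme is therefore a $(2,n-t+2)$-threshold scheme on the $1$-bit secret; its conditional share alphabets are subsets of the original $T_j$, so the original bit-lengths $m_j$ upper-bound the conditional ones. Applying the base case to the conditional scheme gives $\sum_{j\notin B}m_j\geq(n-t+2)\log_2(n-t+2)$. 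Summing over all $\binom{n}{t-2}$ choices of $B$, and noting each $j\in[n]$ is omitted by exactly $\binom{n-1}{t-2}$ of them, yields
\begin{equation*}
\binom{n-1}{t-2}\sum_{j=1}^{n}m_j \;\geq\; \binom{n}{t-2}(n-t+2)\log_2(n-t+2),
\end{equation*}
and since $\binom{n}{t-2}/\binom{n-1}{t-2}=n/(n-t+2)$, dividing through gives the desired $\sum_{j=1}^n m_j\geq n\log_2(n-t+2)$. The $t=2$ case is immediate from the base case itself.

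\emph{Main obstacle.} The alphabet bound $|T_j|\geq m$ in the $(2,m)$ base case is the technical heart: a local pigeonhole gives only $|T_j|\geq 2$, and promoting this to $m$ requires carefully ruling out systematic collisions $w^{(k)}_j=w^{(k')}_j$ between witnesses by exploiting the global structure of $W_0\cup W_1$. Once this is established, the conditioning reduction and the combinatorial identity above are routine.
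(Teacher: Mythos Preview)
The paper does not prove this theorem; it is quoted from the cited references, so there is no ``paper's own proof'' to compare against. That said, your proposal contains a genuine error in the base case.

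Your plan hinges on the per-coordinate alphabet bound $|T_j|\geq m$ for every $j$ in a $(2,m)$-threshold scheme. This bound is false. By the paper's Theorem~\ref{thm3} (or Theorem~\ref{thm1}), a $(2,3)$-threshold scheme on a $1$-bit secret exists with share sizes $1,2,2$, coming from the prefix code $\{0,10,11\}$. For that scheme $T_1\subseteq\{0,1\}$, so $|T_1|\leq 2<3=m$. Your witness argument cannot be repaired by ``reselecting'' or ``iterating with alternative base tuples'': there is no constraint forcing $w^{(k)}$ and $w^{(k')}$ (both in $W_1$) to differ at coordinate $j$, and indeed the counterexample shows they need not. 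The obstacle you flagged is not merely technical --- the target inequality at that step is simply wrong.

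The standard route to the $t=2$ base case is the one the paper records as Lemma~\ref{lemma1}: any $(2,n)$-threshold scheme satisfies $\sum_{j=1}^{n}2^{-m_j}\leq 1$. Combining this with the AM--GM inequality,
\[
\frac{1}{n}\;\geq\;\frac{1}{n}\sum_{j=1}^{n}2^{-m_j}\;\geq\;\Bigl(\prod_{j=1}^{n}2^{-m_j}\Bigr)^{1/n}=2^{-\frac{1}{n}\sum_j m_j},
\]
gives $\sum_j m_j\geq n\log_2 n$ directly, with no per-share claim needed. Your reduction step from general $t$ to $t=2$ by conditioning on $t-2$ shares and averaging over all $\binom{n}{t-2}$ choices is correct and is essentially the argument in the cited literature; only the base case needs to be replaced.
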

\begin{lemma}~\cite{UR,ESS4,ESS1}\label{lemma1}
	Suppose that $\Sigma$ is $(2,n)$-thresholded secret sharing scheme for $1$-bit secret, and the $j$-th share size is $m_{j}$ bits, where $j\in[n]$ and $m_j\in \mathbb{N}$. Then,
	\[
	\sum_{j=1}^{n}\frac{1}{2^{m_{j}}}\leq 1 .
	\]
\end{lemma}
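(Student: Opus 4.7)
The plan is to exploit both the correctness and the secrecy conditions through a short probabilistic argument. Let $D_s$ denote the joint distribution of $(sh_{P_1}^{(s)}, \ldots, sh_{P_n}^{(s)})$ produced by the sharing algorithm $\mathcal{S}$ on secret $s \in \{0,1\}$. Two structural facts about $D_0$ and $D_1$ would be established at the outset. First, the secrecy condition applied to each singleton unqualified set $\{P_j\}$ forces the marginal distribution $\pi_j$ of $sh_{P_j}$ on $\{0,1\}^{m_j}$ to coincide under $D_0$ and $D_1$. Second, for every pair of distinct indices $i \neq j$, the correctness condition forces the projections of $\mathrm{supp}(D_0)$ and $\mathrm{supp}(D_1)$ onto the coordinates $(i,j)$ to be disjoint: otherwise a common pair-value $(v,w)$ would feed identical input to $\mathcal{R}$ under both $s=0$ and $s=1$, forcing $\mathcal{R}$ to output both $0$ and $1$ with probability one on the same input.

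With these two facts in hand, I would draw independent share vectors $(V_1,\ldots,V_n) \sim D_0$ and $(V_1',\ldots,V_n') \sim D_1$, and define the indicator random variables $X_j := \mathbf{1}[V_j = V_j']$ for $j \in [n]$. The disjoint-pair-supports property immediately excludes the event $\{X_i = X_j = 1\}$ whenever $i \neq j$, so $\sum_{j=1}^n X_j \leq 1$ almost surely and hence $\sum_j \mathbb{E}[X_j] \leq 1$. Because the marginals agree, one has $\mathbb{E}[X_j] = \sum_v \pi_j(v)^2$, and Cauchy--Schwarz applied over the at most $2^{m_j}$ support points gives $\sum_v \pi_j(v)^2 \geq 1/\#\mathrm{supp}(\pi_j) \geq 2^{-m_j}$. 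Chaining these two estimates produces $\sum_{j=1}^n 2^{-m_j} \leq 1$, which is the desired Kraft-like bound.

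The main obstacle, as I see it, is cleanly justifying the pair-support disjointness: one must translate the probabilistic correctness guarantee $\Pr[\mathcal{R}(A,\cdot)=s]=1$ into a purely combinatorial statement about the supports of $D_0$ and $D_1$, carefully using that $\mathcal{R}$'s output distribution on a fixed input cannot depend on the unknown secret that generated it. Once that observation is in place, the remainder of the argument is a routine expectation computation combined with a standard $\ell_2$-versus-$\ell_1$ convexity estimate, and the conclusion follows in essentially one line.
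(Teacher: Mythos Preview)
The paper does not actually prove Lemma~\ref{lemma1}; it merely quotes the statement from the references~\cite{UR,ESS4,ESS1} and uses it as a black box in the proof of Theorem~\ref{thm3}. So there is no ``paper's own proof'' to compare against here.

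That said, your argument is correct and is essentially the standard collision argument that appears in those references. The two structural facts you isolate are exactly the right ones: secrecy on singletons forces identical marginals $\pi_j$, and correctness on pairs forces the $(i,j)$-marginals of $D_0$ and $D_1$ to have disjoint supports (since $\mathcal{R}$ run on a fixed pair of share values must deterministically output~$0$ in one case and~$1$ in the other). From there your coupling is clean: sampling $(V_1,\dots,V_n)\sim D_0$ and $(V_1',\dots,V_n')\sim D_1$ independently, the event $\{V_i=V_i',\,V_j=V_j'\}$ would place the same pair in both supports, so indeed $\sum_j X_j\le 1$ almost surely. The lower bound $\mathbb{E}[X_j]=\sum_v\pi_j(v)^2\ge 2^{-m_j}$ is just the standard collision-probability estimate (Cauchy--Schwarz, or equivalently convexity of $x\mapsto x^2$), and linearity of expectation finishes it.

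The only point worth stating a touch more explicitly in a write-up is the passage from the probabilistic correctness guarantee to the support statement: because the share alphabets are finite, any pair $(v,w)$ lying in the support of the $(i,j)$-marginal of $D_s$ occurs with strictly positive probability, and hence $\mathcal{R}(\{i,j\},(v,w))$ must output $s$ with probability~$1$ over its own coins; this is what rules out a common pair. You already flagged this as the main thing to justify, and it goes through without difficulty.
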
 
\subsection{Evolving Secret Sharing Scheme} 
Because the number of participants $n$ is uncertain 
and the upper bound of $n$ cannot be estimated in real scenarios, 
a class of secret sharing schemes needs to be defined so that $n$ can be infinitely countable.
Naturally, in this scenario, the parties participating in secret sharing will not be present at the same time. 
We assume that at the $t$-th moment, the $t$-th person arrives at the scene and asks for the distribution of subsequent shares, the previously distributed shares do not need to be changed, which can most effectively reduce the amount of communication.
Let $\mathcal{PN}=\{P_{1},P_{2},\dots,P_{n},\dots\}$ denote the set of participants.
Definitions~\ref{def1} and~\ref{def2} are naturally extended to the following definitions.
\begin{definition} ~\cite{ESS1}
Suppose that $\mathcal{M}\subseteq 2^{PN}$ is monotone, and for each time $t\in\mathbb{N}$,
$\mathcal{M}_{t}:=\mathcal{M}\cap \{P_{1},P_{2},\dots,P_{t}\}$ is an access structure, then $\mathcal{M}$ is said to be an evolving access structure.
\end{definition}
\begin{definition}~\cite{ESS1}
Let $t$ be a positive integer.
The evolving $t$-threshold access structure $\mathcal{M}$ refers to the set consisting only of all elements in $2^{\mathcal{PN}}$ of size at least $t$, i.e.
	\[
	\mathcal{M}=\{ A\in 2^{\mathcal{PN}}  \,| \, \#A \geq t \}.
	\]
\end{definition}
For simplicity, we use $(t,\infty)$-threshold to represent evolving $t$-threshold.
Now, the formal definition of the evolving secret sharing scheme is as follows.
\begin{definition} \label{def6}~\cite{ESS0,ESS1}
Let $S$ denote a domain of secrets, where $\#S\geq 2$.
Let $\mathcal{M}$ denote an evolving access structure.
An evolving secret sharing scheme $\Sigma$ for $S$ and $\mathcal{M}$ consists of a pair of probabilistic algorithms $(\mathcal{S},\mathcal{R})$.
The sharing algorithm $\mathcal{S}$ and the recovery algorithm $\mathcal{R}$ are required to satisfy:
\begin{itemize}
 \item[1)] At time $t\in \mathbb{N}$, the sharing algorithm $\mathcal{S}$ generates share $sh_{P_t}^{(s)}$ according to the secret $s\in S$ and shares $sh_{P_1}^{(s)},sh_{P_2}^{(s)},\dots,sh_{P_{t-1}}^{(s)}$, i.e.
	\[
		\mathcal{S}(s,\{sh_{P_i}^{(s)}\}_{i\in[t-1]})=sh_{P_t}^{(s)}.
	\]
\item[2)] \textbf{Correctness:} For each time $t\in \mathbb{N}$, every qualified set $A\in \mathcal{M}_t$ and any secret $s\in S$, the recovery algorithm $\mathcal{R}$ can recover the secret $s$ with probability $1$, i.e.
	\[
		Pr[\mathcal{R}(A,\{sh_{i}^{(s)}\}_{i\in A})=s]=1 .
	\]
\item[3)] \textbf{Secrecy:} For each time $t\in \mathbb{N}$, each unqualified set $A\notin \mathcal{M}_t$ does not get any information about the secret $s$;
that is, for each time $t$, any two different secrets $s_1,s_2\in S$ and each unqualified set $A\notin \mathcal{M}_t$, it holds that the distributions $(\{sh_{j}^{(s_1)} \}_{j\in A})$ and $(\{sh_{j}^{(s_2)}\}_{j\in A})$ are the same.
\end{itemize}
\end{definition}
Komargodski \emph{et al.}~\cite{ESS0,ESS1} found the equivalence between the $(2,\infty)$-thresholded secret sharing scheme 
and prefix coding of integers. 
This interesting conclusion is shown below.
\begin{theorem}~\cite{ESS0,ESS1}\label{thm1}
Let $\sigma: \mathbb{N}\rightarrow \mathbb{B}^{*}$ be a prefix coding of integers. 
The length of its $t$-th codeword is $L_\sigma(t)$, where $t\in\mathbb{N}$.
Such an integer code exists if and only if there exists a $(2,\infty)$-threshold secret sharing scheme $\Sigma$ for $1$-bit secret, and the $t$-th share size is $L_\sigma(t)$ bits.
\end{theorem}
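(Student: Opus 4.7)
The plan is to prove each implication constructively, bridging the prefix code and the sharing scheme via a binary tree.

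From the code to the scheme: I would organize the codewords of $\sigma$ as the leaves of a (possibly infinite) binary tree, so that every proper prefix of a codeword indexes an internal node $u$. To each internal node $u$ I attach an independent uniform random bit $r_u$. When party $P_t$ arrives with codeword $c_t=b_1b_2\cdots b_{\ell_t}$ (where $\ell_t=L_\sigma(t)$), its share is a length-$\ell_t$ binary string whose $j$-th bit equals $r_u$ if $b_j=0$ and $r_u\oplus s$ if $b_j=1$, where $u$ denotes the length-$(j{-}1)$ prefix $b_1\cdots b_{j-1}$ of $c_t$. Because new internal nodes are encountered only as new parties arrive, a fresh $r_u$ can be drawn on demand, matching the evolving sharing model of Definition~\ref{def6}; the share size is $\ell_t=L_\sigma(t)$ by construction.

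To verify correctness, take any two parties $P_i,P_j$. The prefix property of $\sigma$ forces $c_i$ and $c_j$ to first disagree at some position $k$, sharing a common ancestor $u=b_1\cdots b_{k-1}$; their $k$-th share bits are $r_u$ and $r_u\oplus s$ in some order, whose XOR is $s$, so any two shares suffice. For secrecy, the bits of a single $P_t$'s share are controlled by the random bits at the \emph{distinct} internal nodes on its root-to-leaf path; each $r_u$ is therefore used exactly once in that share and appears nowhere else in $P_t$'s view, so the marginal distribution of $P_t$'s share is uniform on $\mathbb{B}^{\ell_t}$ independently of $s$.

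From the scheme to the code: for every $n$, restricting $\Sigma$ to $\{P_1,\dots,P_n\}$ yields a $(2,n)$-threshold scheme with share sizes $L_\sigma(1),\dots,L_\sigma(n)$, so Lemma~\ref{lemma1} gives $\sum_{t=1}^n 2^{-L_\sigma(t)}\leq 1$; letting $n\to\infty$ yields $\sum_{t=1}^\infty 2^{-L_\sigma(t)}\leq 1$. The existence direction of Kraft's inequality then furnishes a (possibly infinite) prefix code whose $t$-th codeword has length $L_\sigma(t)$.

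The main obstacle I anticipate is the secrecy check on the forward side: one must keep careful track of which random bits $r_u$ appear in $P_t$'s share and argue that each appears only once there, which follows from the uniqueness of the root-to-leaf path but requires spelling out. Correctness and the Kraft-based converse are then essentially bookkeeping.
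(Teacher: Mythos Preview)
Your proof is correct. Note, however, that the paper does not itself prove Theorem~\ref{thm1}; it merely cites \cite{ESS0,ESS1} for the result and then proves the finite analogue, Theorem~\ref{thm3}. Comparing your argument to that proof: your converse direction (scheme $\Rightarrow$ code, via Lemma~\ref{lemma1} and Kraft's inequality) is identical to the paper's. For the forward direction the paper uses a single random string $Q$ shared across all parties, setting the $t$-th share to $\sigma(t)\oplus Q|_{t}$ when $s=0$ and to $Q|_{t}$ when $s=1$, with recovery by testing whether the shorter share is a prefix of the longer one. You instead attach an independent random bit to each \emph{internal node} of the code tree and recover by XOR-ing the two share bits at the first position where the two codewords diverge. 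Both constructions are valid; the paper's version uses less randomness (one bit per depth rather than one per node) and yields a particularly clean recovery rule, while your per-node version makes the ``generate randomness on demand'' aspect of the evolving model in Definition~\ref{def6} more transparent, since each fresh $r_u$ is tied to a newly visited tree node rather than to an ever-growing global string.
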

An interesting and essential understanding of Theorem~\ref{thm1} can be found in~\cite{ESS2}. 

\subsection{The Metric Minimum Expansion Factor $K_{\mathcal{C}}^{*}$ for Universal Coding of Integers} 
Universal coding of integers is a class of binary prefix code,
such that the ratio of the expected codeword length to $\max\{1,H(P)\}$ 
is within a constant for any decreasing probability distribution $P$ of $\mathbb{N}$ (i.e., $\sum_{n=1}^{\infty}P(n)=1$, and $P(m)\geq P(m+1)\geq 0$ for all $m\in \mathbb{N}$), where $H(P):=-\sum_{n=1}^{\infty}P(n)\log_{2}P(n)$ is the entropy of $P$.
The formal definition of universal coding of integers is as follows.
\begin{definition} ~\cite{Elias75,yan}
Let $\sigma: \mathbb{N}\rightarrow \mathbb{B}^{*}$ be a binary prefix coding of integers.
Let $L_{\sigma}(\cdot)$ denote the length function of $\sigma$ so that $L_{\sigma}(m)=|\sigma(m)|$ for all $m\in \mathbb{N}$. 
$\sigma$ is called universal if there exists a constant $K_{\sigma}$ independent of $P$, such that
\begin{equation}\label{eq2}
	\frac{E_{P}(L_{\sigma})}{\max\{1,H(P)\}}\leq K_{\sigma},
\end{equation}
for any decreasing probability distribution $P$ with finite entropy, where $$E_{P}(L_{\sigma}):=\sum_{n=1}^{\infty}P(n)L_{\sigma}(n)$$ 
denotes the expected codeword length for $\sigma$.
Then $K_{\sigma}$ is called the expansion factor.
Let $K_{\sigma}^{*}:=\inf\{K_{\sigma}\}$ be the infimum of the set of expansion factors.
$K_{\sigma}^{*}$ is called the minimum expansion factor.
\end{definition}
The minimum expansion factor $K_{\sigma}^{*}$ is the smallest of the expansion factors.
For any universal coding of integers, its minimum expansion factor $K_{\sigma}^{*}$ is unique.
The minimum expansion factor $K_{\sigma}^{*}$, as a metric, evaluate the compression effect of universal coding of integers. 
Therefore, universal coding of integers $\sigma$ is called \emph{optimal} if $\sigma$ achieves the smallest $K_{\sigma}^{*}$.

Finally, we introduce two classes of universal coding of integers, $\iota$ code~\cite{yan2} and $\eta$ code~\cite{yan}, which can achieve smaller expansion factors. 
In particular, $\iota$ code is currently the only universal coding of integers that can achieve $K_{\iota}=2.5$.
The two codes are briefly introduced as follows.

The unary code $\alpha$ of the non-negative integer $m$ is constructed as $m$ bits of $0$ followed by a single $1$.
Let $\beta(m)$ denote the standard binary representation of $m\in\mathbb{N}$, 
and let $[\beta(m)]$ denote the removal of the most significant bit $1$ of $\beta(m)$.
For example, $\alpha(2)=001$, $\beta(10)=1010$ and $[\beta(10)]=010$.
Note that $[\beta(1)]$ is a null string.
We define $\beta(0)$ as a null string, and the length of the null string is $0$.
\begin{enumerate}
	\item The code $\iota: \mathbb{N}\rightarrow \mathbb{B}^{*}$ can be expressed as
\begin{equation*}
	\iota(m)=\left\{\begin{array}{lll}
		 1,                                                         &\text{if } m=1\text{,}   \\
		\alpha(\frac{|\beta(m)|}{2})0[\beta(m)],                     &\text{if } |\beta(m)| \text{ is even,} \\
		\alpha(\frac{|\beta(m)|-1}{2})\beta(m),                &\text{otherwise,}   \\
	\end{array}\right.
\end{equation*}
	for all $m\in \mathbb{N}$. The codeword length is given by
	\begin{equation*}
	\begin{aligned}
		|\iota(m)|&=1+\lfloor \frac{|\beta(m)|}{2} \rfloor+|\beta(m)|    \\
		&=2+\lfloor \frac{1+\lfloor\log_2(m)\rfloor}{2} \rfloor+\lfloor\log_2(m)\rfloor,         \\
	\end{aligned}
\end{equation*}
for $2\leq m\in \mathbb{N}$ and $|\eta(1)|=1$.
	\item The code $\eta: \mathbb{N}\rightarrow \mathbb{B}^{*}$ can be expressed as
\begin{equation*}
	\eta(m)=\left\{\begin{array}{ll}
		\alpha(\frac{|\beta(m-1)|}{2})\beta(m-1),                     &\text{if } |\beta(m-1)| \text{ is even,} \\
		\alpha(\frac{1+|\beta(m-1)|}{2})0[\beta(m-1)],                &\text{otherwise,}   \\
	\end{array}\right.
\end{equation*}
	for all $m\in \mathbb{N}$. The codeword length is given by
	\begin{equation*}
		\begin{aligned}
			|\eta(m)|&=1+\lfloor \frac{1+|\beta(m-1)|}{2} \rfloor+|\beta(m-1)|    \\
			&=3+\lfloor \frac{\lfloor\log_2(m-1)\rfloor}{2} \rfloor+\lfloor\log_2(m-1)\rfloor,
		\end{aligned}
	\end{equation*}
	for $2\leq m\in \mathbb{N}$ and $|\eta(1)|=1$.
\end{enumerate}
\section{The Reachable Lower Bound of the Sum of Share Sizes for $(2,n)$-Threshold Secret Sharing Schemes}\label{bound}
In this section, the reachable lower bound of the sum of share sizes for $(2,n)$-threshold secret sharing schemes is proved.
Let $\min[sum(2,n)]$ denote the minimum sum of share sizes in all $(2,n)$-threshold secret sharing schemes. 
The \emph{optimal $(2,n)$-threshold secret sharing scheme} is defined as the secret sharing scheme 
with the sum of share sizes $\min[sum(2,n)]$.

In the early 1990s, Kilian and Nisan first proposed and proved Theorem~\ref{thm2} in an email, 
but it was not officially published. 
This unpublished result has been mentioned in numerous papers 
and the proof by Kilian and Nisan was first published publicly in~\cite{ESS4}.
Theorem~\ref{thm2} shows that the sum of share sizes for $(2,n)$-threshold secret sharing schemes 
is greater than or equal to $n\log_{2}n$, i.e.
$$\min[sum(2,n)]\geq n\log_{2}n.$$

This section gives a tighter bound than the lower bound $n\log_{2}n$ and this bound is achievable; 
that is, this section gives the exact expression for $\min[sum(2,n)]$.
First, a correspondence similar to Theorem~\ref{thm1} is given.
\begin{theorem}\label{thm3}
Let $\sigma: [n]\rightarrow \mathbb{B}^{*}$ be a prefix code. 
The length of its $t$-th codeword is $L_\sigma(t)$, where $t\in[n]$.
Such a prefix code exists if and only if there exists a $(2,n)$-threshold secret sharing scheme $\Sigma$ for $1$-bit secret, and the $t$-th share size is $L_\sigma(t)$ bits.
\end{theorem}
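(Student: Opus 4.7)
The statement is the finite-$n$ analog of Theorem~\ref{thm1}, and I plan to prove it by the same two-direction template: one direction is immediate from Lemma~\ref{lemma1}, and the other needs an explicit construction.

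For the direction ``$(2,n)$-threshold scheme $\Rightarrow$ prefix code,'' suppose we are handed a scheme $\Sigma$ with share sizes $m_j = L_\sigma(j)$ for $j\in[n]$. Lemma~\ref{lemma1} delivers the Kraft inequality $\sum_{j=1}^{n} 2^{-m_j} \leq 1$ on these $n$ lengths, and the (finite) Kraft--McMillan theorem then produces a binary prefix code $\sigma : [n] \to \mathbb{B}^{*}$ whose codeword lengths match $L_\sigma(t)$ exactly.

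For the reverse direction ``prefix code $\Rightarrow$ $(2,n)$-threshold scheme,'' I would construct the scheme directly. Writing $\sigma(t) = b_{t,1} b_{t,2} \cdots b_{t,L_\sigma(t)}$ and $L := \max_{t \in [n]} L_\sigma(t)$, the dealer samples $v = (v_1, \ldots, v_L)$ uniformly from $\mathbb{B}^{L}$ and defines the share of participant $P_t$ by
\[
sh_{P_t}^{(s)} := \bigl( v_1 \oplus (s \cdot b_{t,1}),\ v_2 \oplus (s \cdot b_{t,2}),\ \ldots,\ v_{L_\sigma(t)} \oplus (s \cdot b_{t,L_\sigma(t)}) \bigr),
\]
which has exactly $L_\sigma(t)$ bits. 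Secrecy is immediate because for each fixed $s$ the map $v \mapsto sh_{P_t}^{(s)}$ is a bijection onto $\mathbb{B}^{L_\sigma(t)}$, so a single share is uniform on $\mathbb{B}^{L_\sigma(t)}$ independently of $s$. Correctness rests on prefix-freeness: any two distinct codewords $\sigma(t), \sigma(t')$ must differ at some position $i^{\ast} \leq \min\{L_\sigma(t), L_\sigma(t')\}$ (else one would be a prefix of the other), and XOR-ing the $i^{\ast}$-th bits of the two shares returns $(b_{t,i^{\ast}} \oplus b_{t',i^{\ast}}) \cdot s = s$; the index $i^{\ast}$ is computable from the two public codewords alone, so $\mathcal{R}$ is a well-defined deterministic function.

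The argument is essentially a finite specialization of what underlies Theorem~\ref{thm1}, with neither direction presenting a real obstacle. The only subtlety I would flag is that secrecy must hold \emph{exactly} (equality of the two distributions required in Definition~\ref{def3}, not merely statistical closeness), and this is handled cleanly by the bijectivity observation above; the remaining work is just tidy bookkeeping to verify both bullets of Definition~\ref{def3}.
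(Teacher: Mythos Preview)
Your proof is correct and essentially identical to the paper's: both use Lemma~\ref{lemma1} plus Kraft for one direction and the same random-pad XOR construction for the other (the paper merely swaps the roles of $s=0$ and $s=1$ and recovers via a prefix test on the two shares rather than your bit-comparison at a known differing codeword position). One minor wording quibble: the map $v\mapsto sh_{P_t}^{(s)}$ from $\mathbb{B}^{L}$ is a surjection rather than a bijection, but your uniformity conclusion is of course unaffected.
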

\begin{proof}
\begin{itemize}
\item[(1)] \textbf{Sufficient:}
Suppose that there exists a $(2,n)$-threshold secret sharing scheme $\Sigma$ for $1$-bit secret, and the $t$-th share size is $L_\sigma(t)$ bits.
From Lemma~\ref{lemma1}, we obtain
\[
\sum_{j=1}^{n}\frac{1}{2^{L_\sigma(j)}}\leq 1 .
\]
Due to the Kraft's inequality~\cite{kraft}, we know that there exists a prefix code with codeword lengths $L_\sigma(1),L_\sigma(2),\dots,L_\sigma(n)$.
\item[(2)]\textbf{Necessary:}
Suppose that there is a prefix code with codeword lengths $L_\sigma(1),L_\sigma(2),\dots,L_\sigma(n)$.
Next, we construct a $(2,n)$-threshold secret sharing scheme $\Sigma$ for $1$-bit secret.
Let $s\in\mathbb{B}$ be the secret, and let $M$ denote the maximum value among $L_\sigma(1),L_\sigma(2),\dots,L_\sigma(n)$.
The dealer randomly generates a binary string $Q$ of length $M$.
Let $Q|_{t}$ denote the first $L_\sigma(t)$ bits of the string $Q$.
If $s=0$, then the $t$-th share is $sh(t)=\sigma(t)\oplus Q|_{t}$;
If $s=1$, then the $t$-th share is $sh(t)= Q|_{t}$.
The $t$-th share size is $L_\sigma(t)$ bits. 

In the recovery secret stage, let the two different shares be $sh(t_{1})$ and $sh(t_{2})$.
Without loss of generality assume that $|sh(t_1)|\leq|sh(t_2)|$.
If $sh(t_{1})$ is a prefix of $sh(t_{2})$, the output is $1$; otherwise, the output is $0$.
\begin{itemize}
\item[(2.1)]\textbf{Correctness:} If $s=0$, then $sh(t_1)=\sigma(t_1)\oplus Q|_{t_1}$ and 
$sh(t_2)=\sigma(t_2)\oplus Q|_{t_2}$. Since $Q|_{t_1}$ is a prefix of $Q|_{t_2}$ and $\sigma(t_1)$ is not a prefix of $\sigma(t_2)$, then $\Pi(t_1)$ is not a prefix of $\Pi(t_2)$. 
Therefore, the output is $0$ is correct.
If $s=1$, then $sh(t_1)=Q|_{t_1}$ and $sh(t_2)=Q|_{t_2}$.
Since $Q|_{t_1}$ is a prefix of $Q|_{t_2}$, the output is $1$ is correct.
\item[(2.2)]\textbf{Secrecy:}
Because $Q|_{t}$ is uniformly distributed on $\mathbb{B}^{L_\sigma(t)}$, whether $s=0$ or $s=1$, 
there is a share $sh(t)$ uniformly distributed on $\mathbb{B}^{L_\sigma(t)}$. 
Therefore, no single party can get any information about the secret $s$.
\end{itemize}
\end{itemize}
\end{proof}

Next, the main theorem of this section is given.
\begin{theorem} \label{thm4}
Let $n$ be an integer greater than $1$, then
\[
\min[sum(2,n)]=nm+2l,
\]
where $m:=\lfloor \log_{2}n \rfloor$ and $l:=n-2^{m}$.
\end{theorem}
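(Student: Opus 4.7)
My plan is to first invoke Theorem~\ref{thm3} to translate the problem from secret sharing into prefix coding on $[n]$: constructing a $(2,n)$-threshold scheme with share sizes $L_1,\ldots,L_n$ is equivalent to constructing a binary prefix code on $[n]$ with codeword lengths $L_1,\ldots,L_n$. Hence $\min[sum(2,n)]$ equals the minimum of $\sum_{j=1}^{n} L_j$ over all such prefix codes.

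For the achievability side, I would exhibit a ``balanced'' prefix code consisting of $2^m - l$ codewords of length $m$ and $2l$ codewords of length $m+1$. A direct check shows its Kraft sum equals $(2^m - l)\cdot 2^{-m} + 2l\cdot 2^{-(m+1)} = 1$, so such a code exists by Kraft's inequality, and its total codeword length is $(2^m - l)m + 2l(m+1) = nm + 2l$.

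For the matching lower bound, I would combine Lemma~\ref{lemma1} with a single elementary auxiliary inequality: for every integer $d$,
\[
d + 2^{1-d} \geq 2,
\]
with equality precisely at $d \in \{0,1\}$. Substituting $d_j = L_j - m$ yields $L_j + 2^{m+1-L_j} \geq m+2$ for each $j$; summing over $j \in [n]$ and combining with the scaled Kraft inequality $\sum_{j=1}^n 2^{m+1-L_j} \leq 2^{m+1}$ gives
\[
\sum_{j=1}^{n} L_j \;\geq\; n(m+2) - 2^{m+1} \;=\; nm + 2(n-2^m) \;=\; nm + 2l,
\]
matching the upper bound.

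The main obstacle, and the key insight, is identifying this auxiliary inequality, which packages the Kraft constraint together with the target bound on a single line. Verifying $\phi(d) := d + 2^{1-d} \geq 2$ itself is routine: $\phi(0)=\phi(1)=2$, and the first difference $\phi(d)-\phi(d-1) = 1 - 2^{1-d}$ is positive for $d\geq 2$ and negative for $d\leq 0$, so $\phi$ attains its integer minimum $2$ on $\{0,1\}$. The equality case in the chain also characterizes the optimal codes as precisely those with $L_j \in \{m, m+1\}$ and Kraft sum equal to one, namely the balanced construction above, which confirms the bound is sharp.
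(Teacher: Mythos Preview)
Your proof is correct, and the overall reduction via Theorem~\ref{thm3} matches the paper, but the two arguments diverge from there. The paper observes that minimizing $\sum_j L_j$ over prefix codes on $[n]$ is the same as minimizing the expected codeword length under the uniform distribution $P(j)=1/n$, then appeals to the optimality of Huffman coding: for a uniform source the Huffman tree is full with leaf depths differing by at most one, which forces exactly $2^m-l$ leaves at depth $m$ and $2l$ leaves at depth $m+1$, giving total length $nm+2l$. Your route is instead a direct combinatorial lower bound: the scalar inequality $d+2^{1-d}\ge 2$ for integers $d$, applied with $d=L_j-m$ and summed against the Kraft inequality (Lemma~\ref{lemma1}), yields $\sum_j L_j \ge n(m+2)-2^{m+1}=nm+2l$ in one line, and your achievability construction is the same balanced code the paper obtains from Huffman. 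The paper's argument is shorter to state because it offloads the work to a classical result, but it leaves implicit why the Huffman tree for a uniform source is balanced; your argument is fully self-contained, uses nothing beyond Kraft, and as a bonus its equality analysis immediately characterizes all optimal length profiles as those with $L_j\in\{m,m+1\}$ and Kraft sum equal to one.
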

\begin{proof}
Suppose that $\Sigma$ is a $(2,n)$-threshold secret sharing scheme for $1$-bit secret, and the $j$-th share size is $m_j$ bits, where $j\in[n]$.
We need to find a scheme $\Sigma$ that minimizes the sum $\sum_{j=1}^{n}m_{j}$; 
that is, find the minimum value $\min[sum(2,n)]$ of $\sum_{j=1}^{n}m_{j}$.

Due to Theorem~\ref{thm3}, $\Sigma$ corresponds to a prefix code $\sigma: [n]\rightarrow \mathbb{B}^{*}$,
and the length of its $j$-th codeword is $m_j$, where $j\in[n]$.
To make $\sum_{j=1}^{n}m_{j}$ minimum is equivalent to minimizing the
expected codeword length $L=\sum_{j=1}^{n}P_{6}(j)m_{j} =\frac{1}{n}\sum_{j=1}^{n}m_{j}$
of the prefix code $\sigma$ with probability distribution 
\[
P=\left(P(1)=P(2)=\dots=P(n)=\frac{1}{n}\right).
\]
Because given a probability distribution, Huffman code is the prefix code with the smallest expected codeword length.
Therefore, it is only necessary to calculate the expected codeword length 
when encoding with Huffman code in the case of the probability distribution $P$.
From $P$ is a uniform distribution and the encoding rule of Huffman code, 
it can be known that the code tree obtained by encoding is a full binary tree 
and the layers of leaf nodes differ by at most $1$, so the lengths of all codewords differ by at most $1$.
Then, there are $2^{m}-l$ codewords with codeword length $m$ and $2l$ codewords with codeword length of $m+1$.
Therefore, the minimum value of $\sum_{j=1}^{n}m_{j}$ is
\begin{equation*}
	\begin{aligned}
		\min[sum(2,n)]&= m(2^{m}-l)+2l(m+1)    \\
		&= m2^m+ml+2l    \\
		&= nm+2l.
	\end{aligned}
\end{equation*}
\end{proof}
From Theorem~\ref{thm4}, $nm+2l$ is the reachable lower bound of the sum of share sizes 
for $(2,n)$-threshold secret sharing scheme. 
Since $n\log_{2}n$ is the lower bound, there must be $nm+2l \geq  n\log_{2}n$. 
To show that our results are novel, 
we compare $nm+2l$ and $n\log_{2}n$ only from a purely mathematical point of view. 
\begin{lemma} \label{lemma2}
For any $n=2^{m}+l\in \mathbb{N}$, where $m=\lfloor \log_{2}n \rfloor$. Then
\begin{equation} \label{eq1}
	nm+2l \geq n\log_{2}n .
\end{equation}	
\end{lemma}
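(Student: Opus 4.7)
The plan is to reduce the integer inequality to a single real-variable statement and then settle it by a convexity argument. Writing $n = 2^m + l$ and dividing (\ref{eq1}) through by $n$, the claim becomes
\[
m + \frac{2l}{n} \;\geq\; \log_2 n \;=\; m + \log_2\!\left(1 + \tfrac{l}{2^m}\right),
\]
so the statement is equivalent to $\frac{2l}{n} \geq \log_2(1 + l/2^m)$. Setting $x := l/2^m$, which lies in $[0,1)$ since $0 \leq l < 2^m$, and using $l/n = x/(1+x)$, this is in turn equivalent to
\[
(1+x)\log_2(1+x) \;\leq\; 2x, \qquad x \in [0,1).
\]

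The plan for the real-variable inequality is to define $h(x) := (1+x)\log_2(1+x) - 2x$ and study it on the closed interval $[0,1]$. First I would check the two endpoint values: $h(0) = 0$ and $h(1) = 2\log_2 2 - 2 = 0$. Next I would compute $h''(x) = \frac{1}{(1+x)\ln 2}$, which is strictly positive, so $h$ is strictly convex on $[0,1]$. By convexity, $h$ lies on or below its chord, and the chord joining $(0,h(0))$ and $(1,h(1))$ is simply the segment $y = 0$. Therefore $h(x) \leq 0$ throughout $[0,1]$, and in particular on $[0,1)$, which is exactly what is needed.

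I do not anticipate a serious obstacle: the only mildly delicate step is recognizing the correct substitution $x = l/2^m$ that turns the two competing logarithms into a single one, after which the problem collapses to the well-known fact that $(1+x)\log_2(1+x) - 2x$ vanishes at $x=0$ and $x=1$ and is convex in between. One should note for completeness that the boundary case $l = 0$ (i.e., $n$ a power of $2$) corresponds to $x = 0$, where both sides of (\ref{eq1}) are equal to $n\log_2 n$, consistent with $h(0) = 0$.
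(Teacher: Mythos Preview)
Your proof is correct and follows essentially the same strategy as the paper's: reduce the inequality to a single real-variable statement on $[0,1)$ and then verify it by a derivative computation together with the endpoint values. The paper uses the substitution $y := \log_2 n - \lfloor \log_2 n\rfloor$, which is related to yours by $y = \log_2(1+x)$, and arrives at the equivalent inequality $2^{y}(2-y) \ge 2$; it then shows via the first derivative that $f(y)=2^{y}(2-y)$ is unimodal on $[0,1)$, so $f(y)\ge \min\{f(0),f(1)\}=2$. Your route differs only in the final calculus step: by computing the second derivative and invoking convexity you bypass the need to locate the critical point $y_0 = 2 - 1/\ln 2$, which is a mild but genuine simplification.
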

\begin{proof}
Let $x:=\log_{2}n$ and $y:=x-\lfloor x \rfloor$, then $0\leq y <1$.
We obtain
\begin{equation*}
	\begin{aligned}
		nm+2l \geq n\log_{2}n 	\iff & n\lfloor \log_{2}n \rfloor +2n-2 \cdot 2^{\lfloor \log_{2}n \rfloor} \geq n\log_{2}n    \\
		\iff & n(2+\lfloor \log_{2}n \rfloor-\log_{2}n ) \geq 2 \cdot 2^{\lfloor \log_{2}n \rfloor}   \\
		\iff & 2^{x}(2+\lfloor x \rfloor-x) \geq 2 \cdot 2^{\lfloor x \rfloor}       \\
		\iff & 2^{y}(2-y) \geq 2 .  \\
	\end{aligned}
\end{equation*}
Therefore, it is equivalent to proving that $2^{y}(2-y)\geq 2$ for $y\in [0,1)$.
Let $f(y):=2^{y}(2-y)$.
By calculating the derivative, we know that $f(y)$ is strictly monotonically increasing over the interval $[0,y_{0})$
and strictly monotonically decreasing over the interval $[y_{0},1)$, where $y_{0}=2-\dfrac{1}{\ln2}$.
Therefore, for $y\in [0,1)$, we have
\[
f(y)\geq \min\{f(0),f(1)\}=2.
\]
\end{proof}

From the proof process of Lemma~\ref{lemma2}, we know that when $y=0$, inequality~\eqref{eq1} is equal; 
that is, inequality~\eqref{eq1} is equal when $n$ is a power of 2. 
Due to the monotonicity of the function $f(y)$, 
we reasonably guess that at the midpoint $2^m+2^{m-1}$ of the two powers of 2, 
the difference between the two sides of inequality~\eqref{eq1} is large. 
Thus, Table~\ref{tab1} lists some values comparing $nm+2l$ and $n\log_{2}n$. 
Table~\ref{tab1} confirms the equivalence of inequality~\eqref{eq1} from the side; that is, when $n=2^m$, 
there is $nm+2l=m\cdot2^m=n\log_{2}n$. 
In addition, it can be found that the difference at $2^m+2^{m-1}$ is larger when $m$ is larger, interestingly, the latter difference is $2$ times the previous difference. 
This finding is verified below.
\begin{table}[t]
	\caption{Comparison of some values of $nm+2l$ and $n\log_{2}n$}\label{tab1}
	\centering
	\begin{threeparttable}
	\begin{tabular}{|c|c|c|c||c|c|c|c|}
		\hline
		$n$  &   $nm+2l$   &   $n\log_{2}n$  & difference  &  $n$  &   $nm+2l$   &   $n\log_{2}n$  & difference  \\
		\hline
		$2$   &    2  &   2       &  0     &  	$3$   &     5    &     4.75      &  0.25      \\
		$4$   &    8  &   8      &  0   &   	$6$   &     16     &   15.51     &    0.49  \\
		$8$   &     24  &   24   &  0   &     $12$   &      44      &   43.02      &   0.98  \\
		$16$   &    64  &   64   &  0  &   	  $24$   &      112      &   110.04    &   1.96    \\
		$32$   &   160  &   160  &  0   &  	  $48$   &      272     &   268.08   &   3.92   \\
		$64$   &   384  &   384   &  0   &     $96$   &     640 &    632.16  &  7.84  \\
		$2^7$   &  896  &   896   &  0   &    	$2^7+2^6$   &     1472  & 1456.31   & 15.69     \\
		$2^8$   &  2048  &  2048  &  0  &     	$2^8+2^7$   &  3328   &   3296.63 &   31.37 \\
		$2^9$   &  4608  &  4608  &  0  &      	$2^9+2^8$   &   7424  &   7361.25 &   62.75    \\
		$2^{10}$  &  10240  & 10240  &  0  &    $2^{10}+2^9$  &   16384  &   16258.50  &  125.50  \\
		$2^{11}$  &  22528  & 22528  &  0  &    $2^{11}+2^{10}$  &   35840  &  35589.00  & 251.00   \\
		$2^{12}$  &  49152  & 49152  &  0 &    	$2^{12}+2^{11}$  &   77824 &   77322.01  &  501.99   \\
		$2^{13}$  &  106496 &   106496   &  0 & $2^{13}+2^{12}$  &   167936  &  166932.02  & 1003.98 \\
		$2^{14}$  &  229376 &   229376  &  0 &  $2^{14}+2^{13}$  &  360448  &  358440.04 &   2007.96  \\
		$2^{15}$  &  491520 &   491520  &  0 &  $2^{15}+2^{14}$  &   770048  &  766032.08  &  4015.92    \\
		$2^{16}$  &  1048576 &  1048576 &  0 &	$2^{16}+2^{15}$  &   1638400  & 1630368.15   & 8031.85  \\
		\hline
	\end{tabular}
\begin{tablenotes}
\footnotesize 
\item  Note: After rounding $n\log_{2}n$, retain two decimal places.
\end{tablenotes}
\end{threeparttable}
\end{table}

Let $n_1:=2^m+2^{m-1}$ and $n_2:=2^{m-1}+2^{m-2}$.
Let $d(n):=nm+2l-n\log_{2}n$, we obtain
\begin{equation*}
	\begin{aligned}
		d(n_1)& =(2^m+2^{m-1})m+2^{m}-(2^m+2^{m-1})\log_{2}(2^m+2^{m-1})  \\
		&= (2^m+2^{m-1})(m-1)+2^{m}-(2^m+2^{m-1})\log_{2}(2^{m-1}+2^{m-2})         \\
		&= 2\Big[(2^{m-1}+2^{m-2})(m-1)+2^{m-1}-(2^{m-1}+2^{m-2})\log_{2}(2^{m-1}+2^{m-2}) \Big] \\
		&= 2d(n_2). \\
	\end{aligned}
\end{equation*}
This shows that when $m$ tends to infinity, the difference at $2^m+2^{m-1}$ tends to infinity.
Therefore, the reachable lower bound $nm+2l$ proved in this paper is not only a new result, but also very meaningful.

\section{A New Metric for $(2,\infty)$-Threshold Secret Sharing Schemes}\label{metric}
Theorem~\ref{thm1} shows that there is a one-to-one correspondence between the $(2,\infty)$-thresholded secret sharing scheme 
and the prefix coding of integers.  
For simplicity, integer codes mentioned in this paper refer to the prefix coding of integers.

The question that this section will aim to answer is which integer codes should be chosen to construct secret sharing better;
that is, this section strives to formulate a metric 
under which one can judge which of the two integer codes is more suitable for constructing $(2,\infty)$-thresholded secret sharing schemes.

In traditional secret sharing, since shares are distributed at one time, the metric to measure is the sum of share sizes, and it is hoped that the smaller the sum, the better. 
In evolving secret sharing, the specific number of parties participating in secret sharing is not known, and even the number of parties may be infinitely countable. 
Therefore, it is impossible to calculate the sum of share sizes by distributing the shares all at one time.

The simplest idea of whether it is possible to construct a $(2,\infty)$-threshold secret sharing scheme so that the length of share generated at any $t$ time is the smallest. Unfortunately, such a scheme does not exist. 
From Theorem~\ref{thm1}, for any $(2,\infty)$-thresholded secret sharing scheme $\Sigma$, 
there exists an integer code $\sigma$, such that the $t$-th codeword length $L_\sigma(t)$ is exactly 
the size of share distributed by the scheme $\Sigma$ at the $t$-th moment. 
From the theory of prefix codes, it is impossible to have an integer code $\sigma$, 
such that for any integer code $\psi$ and any positive integer $t$, 
there is $L_\sigma(t)\leq L_\psi(t)$. 
An integer code is called \emph{complete} if the integer code makes Kraft's inequality~\cite{kraft} equal.
When two different complete integer codes $\psi$ and $\sigma$ compare the codeword lengths, 
if $t$ is sufficiently large, both have $L_\sigma(t)\leq L_\psi(t)$, then there must be a smaller positive integer $t_0$ such that $L_\sigma(t_0) > L_\psi(t_0)$; in other words, 
the codeword length advantage of the $\sigma$ at larger integers is at the expense of codeword length at smaller integers. 
So this simple idea does not work.
 
The second common idea only considers sufficiently large moments $t$, if there is a $(2,\infty)$-threshold secret sharing scheme such that the share size $L(t)$ generated at sufficiently large moments $t$ are all the smallest, we think this scheme is the best.
It is known that the codeword length advantage at larger integers is at the expense of codeword length at smaller integers. However, in evolving secret sharing, the share size $L(t)$ of the larger moment $t$ is obviously not as important as the share size of the smaller moment due to the unknown number of parties. Therefore, this idea is unreasonable, and we need to find more reasonable and feasible metrics.

In fact, evolving secret sharing has one thing in common with universal coding of integers: they both face unknowns.
Evolving secret sharing has no prior knowledge of the number of parties, and universal coding of integers has no prior knowledge of probability distributions.
Therefore, it is hoped to propose a metric similar to the metric minimum expansion factor $K_{\sigma}^{*}$ to evaluate the overall effect of $(2,\infty)$-threshold secret sharing schemes.
The new metric is defined as follows.
\begin{definition} 
Let $\Sigma$ be a $(2,\infty)$-threshold secret sharing scheme, which corresponds to the integer code $\sigma$,
and the size of the share distributed at $t$-th moment is $L_\sigma(t)$ bits. 
The global metric $K_{\Sigma}$ of the scheme $\Sigma$ is defined as follows:
	\begin{equation}\label{eq3}
		\begin{aligned}
		K_{\Sigma}:=& \sup \left\{\dfrac{\sum_{t=1}^{n}L_\sigma(t)}{\min[sum(2,n)]}\,\Bigg|\, \forall n\in \mathbb{N}^{+},n\neq 1 \right\}     \\
		= &\sup \left\{\dfrac{\sum_{t=1}^{n}L_\sigma(t)}{	nm+2l }\,\Bigg|\, \forall n\in \mathbb{N}^{+},n\neq 1 \right\},
		\end{aligned}
	\end{equation}
where $m:=\lfloor \log_{2}n \rfloor$ and $l:=n-2^m$.   
\end{definition}

The meaning of the global metric $K_{\Sigma}$ is that no matter how many parties participate in secret sharing
when any fixed number of parties is $n_0$,
the sum of the share sizes for the $(2,\infty)$-threshold secret sharing scheme $\Sigma$ is less than or equal to
$K_{\Sigma}$ times of the sum of the share sizes for the optimal $(2,n_0)$-threshold secret sharing scheme.

Finally, the optimal $(2,\infty)$-threshold secret sharing scheme and the optimal integer code 
for $(2,\infty)$-threshold secret sharing schemes are defined below.

\begin{definition}
The $(2,\infty)$-threshold secret sharing scheme with the smallest global metric $K_{\Sigma}$ is called the optimal $(2,\infty)$-threshold secret sharing scheme. 
The integer code corresponding to the optimal $(2,\infty)$-threshold secret sharing scheme is called the optimal integer code 
for $(2,\infty)$-threshold secret sharing schemes.	
\end{definition}

\section{The Value Range of $K_{\Sigma}$ for the Optimal $(2,\infty)$-Threshold Secret Sharing Scheme}\label{sec_new}
In this section, we study the value range of the global metric $K_{\Sigma}$ for the optimal $(2,\infty)$-threshold secret sharing scheme. 

First, the lower bound of the global metric $K_{\Sigma}$ is given.
For simplicity, let
\[
\mathcal{L}(n,\sigma):=\dfrac{\sum_{t=1}^{n}L_\sigma(t)}{nm+2l}.
\]
Consider the case where the number of parties $n=2$.
Due to $L_{\sigma}(1)\geq 1$ and $L_{\sigma}(2)\geq 2$ for any integer code $\sigma$, we obtain
\[
\mathcal{L}(2,\sigma)=\dfrac{\sum_{t=1}^{2}L_\sigma(t)}{ 2\times1+2\times0 } \geq \dfrac{3}{2}.
\]
Therefore, the global metric $K_{\Sigma}$ satisfies $K_{\Sigma}\geq 1.5$.

Next, we will construct schemes $\Sigma$ in the following two subsections so that its the global metric $K_{\Sigma}$ is close to the lower bound $1.5$.

\subsection{Panning Code}
In this subsection, we construct $(2,\infty)$-threshold secret sharing schemes 
with small $K_{\Sigma}$ using known universal coding of integers.

The length of the second codeword of the currently constructed universal coding of integers 
$\sigma$ is strictly greater than $2$; that is, $L_{\sigma}(2)\geq 3$, then 
\[
\mathcal{L}(2,\sigma)=\dfrac{\sum_{t=1}^{2}L_\sigma(t)}{2} \geq 2.
\]
At this time, the global metric $K_{\Sigma}$ is far from the lower bound $1.5$. 
Therefore, we consider a panning code $\sigma+$ constructed from a known integer code $\sigma$.
The panning code $\sigma+: \mathbb{N}^{+}\rightarrow \mathbb{B}^{*}$ is constructed as follows.
\begin{equation*}
	\sigma+(m):=\left\{\begin{array}{ll}
		1,                               &\text{if } \, m=1\text{,}\\
		0\sigma(m-1),                    &\text{otherwise,}   \\
	\end{array}\right.   
\end{equation*}
for all $m\in \mathbb{N}$.
If the length of the first codeword of integer code $\sigma$ is $1$, 
then the length of the first codeword of $\sigma+$ is $1$ and the length of the second codeword of $\sigma+$ is $2$.

Before considering which integer code to use to construct the $(2,\infty)$-threshold secret sharing scheme, 
we first prove the following lemma.
\begin{lemma} \label{lem3}
Let the integer code $\sigma$ satisfy $L_{\sigma}(1)=1$, $L_{\sigma}(2)=2$ and
\begin{equation} \label{eq4}
	L_{\sigma}(t) \leq a+b\lfloor \log_{2}(t-1) \rfloor,
\end{equation}
for all $3\leq t\in\mathbb{N}$, where $a$ and $b$ are positive constants.
Then
\begin{equation}\label{eq5}
	\lim\limits_{n\to +\infty}\mathcal{L}(n,\sigma)\leq b,
\end{equation}
and when inequality~\eqref{eq4} takes the equal sign, inequality~\eqref{eq5} also takes the equal sign.
\end{lemma}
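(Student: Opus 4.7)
The plan is to bound the numerator $\sum_{t=1}^n L_\sigma(t)$ of $\mathcal{L}(n,\sigma)$ using the hypothesis~\eqref{eq4}, evaluate the resulting sum of floored logarithms in closed form, and then divide by the denominator $nm+2l$ and take $n\to\infty$. The whole argument rides on a single observation: both numerator and denominator have the form $nm + O(n)$, so only the coefficient of the leading $nm = n\lfloor\log_2 n\rfloor$ term survives in the limit, and that coefficient is precisely $b$ on the numerator side and $1$ on the denominator side.

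First I would split off the two short codewords, using $L_\sigma(1)=1$ and $L_\sigma(2)=2$, and apply \eqref{eq4} to the tail to obtain
\begin{equation*}
\sum_{t=1}^{n}L_\sigma(t)\;\leq\; 3 + a(n-2) + b\sum_{s=2}^{n-1}\lfloor\log_2 s\rfloor.
\end{equation*}
Next I would derive a closed form for $\sum_{s=1}^{N}\lfloor\log_2 s\rfloor$ by grouping indices $s\in[2^k,2^{k+1})$ and invoking the standard identity $\sum_{k=0}^{m-1}k\cdot 2^k = (m-2)2^m + 2$; with $n=2^m+l$ this gives
\begin{equation*}
\sum_{s=1}^{n}\lfloor\log_2 s\rfloor = nm - 2^{m+1} + m + 2,
\end{equation*}
from which the trimmed sum $\sum_{s=2}^{n-1}\lfloor\log_2 s\rfloor$ differs only by an additive $O(\log n)$. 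Dividing the numerator bound by $nm+2l = nm + 2n - 2^{m+1}$, the leading $nm$ term appears with coefficient $b$ upstairs and coefficient $1$ downstairs, while every other piece, namely the constants, the linear $a(n-2)$, and the $2n$ and $2^{m+1}$ corrections, is $O(n)$ and hence $o(nm)$. Passing to the limit yields $\lim_{n\to\infty}\mathcal{L}(n,\sigma)\leq b$, which is \eqref{eq5}. When \eqref{eq4} holds with equality, exactly the same calculation produces a matching lower bound on the numerator, so the limit exists and equals $b$.

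The main obstacle is essentially bookkeeping: one has to pin down the closed form for $\sum_{s=1}^{n}\lfloor\log_2 s\rfloor$ in terms of $m=\lfloor\log_2 n\rfloor$ and $l=n-2^m$, and then verify carefully that every lower-order contribution, in particular the $a(n-2)$ in the numerator and the $2l$ inside the denominator, is genuinely $o(nm)$. The arithmetic is routine, but it is precisely where the dependence on $a$ is washed out and only $b$ remains, which is the qualitative content of the lemma.
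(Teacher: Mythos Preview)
Your proposal is correct and follows essentially the same route as the paper: split off $L_\sigma(1)+L_\sigma(2)=3$, apply \eqref{eq4} to the tail, evaluate $\sum_{s=2}^{n-1}\lfloor\log_2 s\rfloor$ via the identity $\sum_{d=1}^{m-1}d\cdot 2^d=(m-2)2^m+2$, and observe that only the $b\cdot nm$ term survives against the denominator $nm+2l$ in the limit. The paper packages the floored-log sum as $nm+2l-2n+2$, which makes the cancellation with the denominator slightly more transparent than your $nm-2^{m+1}+m+2$, but the two expressions are algebraically identical and the argument is the same.
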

\begin{proof}
Let $m=\lfloor \log_{2}n \rfloor$ and $l=n-2^m$.
When the integer $n\geq 3$, we have
\begin{equation}  \label{eq6}
\begin{aligned}
	\sum_{t=1}^{n}L_{\sigma}(t)& \leq 3+\sum_{t=3}^{n}\Big( a+b\lfloor \log_{2}(t-1) \rfloor \Big) \\
	&= 3+a(n-2)+b\sum_{t=2}^{n-1}\lfloor \log_{2}t \rfloor         \\
	&= 3+a(n-2)+b\Big(lm+\sum_{d=1}^{m-1}d\cdot2^{d}\Big) \\
	&= 3+a(n-2)+b\Big[lm+(m-2)2^{m}+2\Big]  \\
	&= 3+a(n-2)+b(nm+2l-2n+2). \\
\end{aligned}
\end{equation}
Then, we obtain
\begin{equation*}  
	\begin{aligned}
		\lim\limits_{n\to +\infty}\mathcal{L}(n,\sigma)&\leq  \lim\limits_{n\to +\infty}\dfrac{3+a(n-2)+b(nm+2l-2n+2)}{nm+2l} \\
		&= b+\lim\limits_{n\to +\infty}\dfrac{3+an-2a+2b-2bn}{nm+2l}       \\
		&= b+\lim\limits_{n\to +\infty}\dfrac{(a-2b)n}{nm+2l}       \\
		&= b. \\
	\end{aligned}
\end{equation*}
From the above calculation process, it is easy to know that when inequality~\eqref{eq4} takes the equal sign, inequality~\eqref{eq5} also takes the equal sign.
\end{proof}

Lemma~\ref{lem3} shows that the integer code $\sigma$ should be chosen 
so that the constant $b$ for the panning code $\sigma+$ in inequality~\eqref{eq4} is as small as possible. 
Because the global metric $K_{\Sigma}\leq 1.5$, it is hoped that the constant $b$ in inequality~\eqref{eq4} takes $1.5$. 
In this case, when $n=2$ and $n$ tends to infinity, $\mathcal{L}(n,\sigma+)$ is less than or equal to $1.5$, and it is reasonable to guess that the global metric at this time is better.

Both $\iota$ code and $\eta$ code satisfy that the length of the first codeword is $1$, 
and the constant $b$ in inequality~\eqref{eq4} can be set to be $1.5$.
Therefore, at the end of this subsection, 
we analyze the global metrics for $(2,\infty)$-threshold secret sharing schemes corresponding to 
$\iota+$ code and $\eta+$ code respectively.

First, analyse the global metrics $K_{I+}$ corresponding to $\iota+$ code. 
The $\iota+$ code satisfy $L_{\iota+}(1)=1$, $L_{\iota+}(2)=2$ and
\begin{equation*} 
	L_{\iota+}(t)=L_{\iota}(t-1)+1 \leq \frac{7}{2}+\frac{3}{2}\lfloor \log_{2}(t-1) \rfloor.
\end{equation*}
for all $3\leq t\in\mathbb{N}$.
When $n=4$, we obtain
\[
\mathcal{L}(4,\iota+)=\dfrac{\sum_{t=1}^{4}L_{\iota+}(t)}{ 4\times2+2\times0 } = 1.625.
\]
When $2\leq n<15$, we can directly verify that $\mathcal{L}(n,\iota+)\leq 1.625$.
When $n\geq 16$ (i.e. $m\geq 4$), $a=\frac{7}{2}$ and $b=\frac{3}{2}$ can be substituted into~\eqref{eq6} to get
\begin{equation*}  
	\sum_{t=1}^{n}L_{\sigma}(t) \leq \frac{3}{2}(nm+2l)+\frac{1}{2}n-1. 
\end{equation*}
Thus, we have
\begin{equation*}  
	\begin{aligned}
		\mathcal{L}(n,\iota+)&\leq \dfrac{1.5(nm+2l)+0.5n-1}{nm+2l} \\
		&= 1.5+\dfrac{0.5n-1}{nm+2l}       \\
		&< 1.5+\dfrac{0.5n}{nm}       \\
		&= 1.5+\dfrac{1}{2m}   \\
		&\leq 1.625 . \\
	\end{aligned}
\end{equation*}
Therefore, the global metric corresponding to the $\iota+$ code is $K_{I+}=1.625$.

Second, analyse the global metrics $K_{H+}$ corresponding to $\eta+$ code.
The $\eta+$ code satisfy $L_{\eta+}(1)=1$, $L_{\eta+}(2)=2$ and
\begin{equation*} 
	L_{\eta+}(t)=L_{\eta}(t-1)+1 \leq 4+\frac{3}{2}\lfloor \log_{2}(t-2) \rfloor.
\end{equation*}
for all $3\leq t\in\mathbb{N}$.
When $n=32$, we obtain
\[
\mathcal{L}(32,\eta+)=\dfrac{\sum_{t=1}^{32}L_{\eta+}(t)}{ 32\times5+2\times0 } = 1.61875.
\]
When $2\leq n<512$, we can directly verify that $\mathcal{L}(n,\eta+)\leq 1.61875$.
When $n\geq 512$ (i.e. $m\geq 9$), we consider the following two cases.
\begin{enumerate}
\item $n$ is a power of $2$:

In this case, $n=2^m$ and $nm+2l=nm$.
When $m\geq 9$, we obtain
\begin{equation*}  
	\begin{aligned}
		\sum_{t=1}^{n}L_{\eta+}(t)& \leq 3+\sum_{t=3}^{n}\Big( 4+\frac{3}{2}\lfloor \log_{2}(t-2) \rfloor \Big) \\
		&  = 4n-5+\frac{3}{2}\sum_{t=2}^{n-2}\lfloor \log_{2}t \rfloor         \\
		&  = 4n-5+\frac{3}{2}\Big(\sum_{d=1}^{m-1}d\cdot2^{d}-(m-1)\Big) \\
		&  = 4n-5+\frac{3}{2}\Big[(m-2)2^{m}-m+3\Big]  \\
		&  = \frac{3}{2}nm+n-\frac{3}{2}m-\frac{1}{2}. \\
	\end{aligned}
\end{equation*}
Thus, we have
\begin{equation*}  
	\begin{aligned}
		\mathcal{L}(n,\eta+)&\leq 1.5+\dfrac{n-1.5m-0.5}{nm}       \\
		&< 1.5+\dfrac{n}{nm}       \\
		&= 1.5+\dfrac{1}{m}    \\
		&< 1.61875 . \\
	\end{aligned}
\end{equation*}
\item $n$ is not a power of $2$:

In this case, $n=2^{m}+l$ and $1\leq l \leq 2^{m}-1$.
When $m\geq 9$, we obtain
\begin{equation*}  
	\begin{aligned}
		\sum_{t=1}^{n}L_{\eta+}(t)& \leq  4n-5+\frac{3}{2}\sum_{t=2}^{n-2}\lfloor \log_{2}t \rfloor         \\
		&  = 4n-5+\frac{3}{2}\Big[(l-1)m+\sum_{d=1}^{m-1}d\cdot2^{d}\Big] \\
		&  = 4n-5+\frac{3}{2}\Big[lm-m+(m-2)2^{m}+2\Big]  \\
		&  = \frac{3}{2}\Big(nm+2l\Big)+n-1.5m-2. \\
	\end{aligned}
\end{equation*}
Thus, we have
	\begin{equation*}  
	\begin{aligned}
		\mathcal{L}(n,\eta+)&\leq 1.5+\dfrac{n-1.5m-2}{nm+2l}       \\
		&< 1.5+\dfrac{n}{nm}       \\
		&= 1.5+\dfrac{1}{m}    \\
		&< 1.61875 . \\
	\end{aligned}
\end{equation*}
\end{enumerate}
In summary, when $m\geq 9$, we obtain $\mathcal{L}(n,\eta+)< 1.61875$.
Therefore, the global metric corresponding to the $\eta+$ code is $K_{H+}=1.61875$.

It can be seen from the calculation that the global metrics $K_{I+}=1.625$ and $K_{H+}=1.61875$ 
are close to the lower bound 1.5, 
so $(2,\infty)$-threshold secret sharing schemes $I+$ and $H+$
corresponding to $\iota+$ code and $\eta+$ code have good overall effects.

\subsection{$\lambda$ code achieves $K_{\Lambda}=1.59375$}

In this subsection, we construct a new integer code, termed $\lambda$ code, 
to achieve the global metric $K_{\Lambda}=1.59375$. 
In the previous subsection, the panning codes constructed using the existing integer codes achieve good global metrics.
At present, the best effect is the global metric $K_{H+}=1.61875$, which is achieved by $\eta+$ code.
The structure of $\lambda$ code is related to $\eta+$ code, and the specific structure is as follows.

Let $\boldsymbol{a}\in\mathbb{B}^{m}$ denote a codeword of length $m$, then $\boldsymbol{a}0$ and $\boldsymbol{a}1$ are two codewords of length $m+1$. 
We call the process from $\boldsymbol{a}$ to $\boldsymbol{a}0$ and $\boldsymbol{a}1$ as \emph{the splitting of $\boldsymbol{a}$}.
If $\boldsymbol{a}$ is a codeword of $\sigma$ code, 
and $\boldsymbol{a}$ is replaced by two codewords after $\boldsymbol{a}$ split, 
then $\sigma$ code is said to be split at $\boldsymbol{a}$, 
and $\sigma$ code after the split is noted as $\sigma[\boldsymbol{a}]$.
Obviously, if $\sigma$ code is a prefix code, then $\sigma[\boldsymbol{a}]$ code is still a prefix code.

$\lambda$ code is essentially the code obtained after $\eta+$ code is splits at $\eta+(5)$, $\eta+(8)$, $\eta+(9)$,
$\eta+(14)$, $\eta+(15)$, $\eta+(16)$ and $\eta+(17)$; that is,
$$\lambda=\eta+[\eta+(5),\eta+(8),\eta+(9),\eta+(14),\eta+(15),\eta+(16),\eta+(17)].$$
Table \ref{tab2} lists the first $24$ codewords of $\eta+$ code and $\lambda$ code.
The underlined part in Table \ref{tab2} is related to the split codeword.
$\sum_{t=1}^{24}L_{\lambda}(t)=174$ can be obtained by simple calculations.
When $n\geq 25$, $\lambda(n)=\eta+(n-7)=0\eta(n-8)$.
Therefore, we obtain
\begin{equation*} 
\begin{aligned}
	L_{\lambda}(n)=L_{\eta}(n-8)+1 \leq 4+\frac{3}{2}\lfloor \log_{2}(n-9) \rfloor,
\end{aligned}	
\end{equation*}
for all $25\leq n\in\mathbb{N}$.

\begin{table}[t]
	\centering
	\caption{The first $24$ codewords of $\eta+$ code and $\lambda$ code}\label{tab2}
	\begin{tabular}{|c|c|c|}
		\hline
		$n$  &   $\eta+$ code  & $\lambda$ code   \\
		\hline
		$1$   &      1        &  1           \\
		$2$   &     01      &   01      \\
		$3$   &    0010     &  0010     \\
		$4$   &    00110    &  00110    \\
		$5$   &  \underline{00111} &  \underline{00111}0  \\
		$6$   &   0001000   &  \underline{00111}1     \\
		$7$   &   0001001   &  0001000       \\
		$8$   &  \underline{0001010}   &  0001001     \\
		$9$   &  \underline{0001011}   & \underline{0001010}0    \\
		$10$  &   00011000   & \underline{0001010}1    \\
		$11$   &  00011001   & \underline{0001011}0    \\
		$12$   &  00011010   & \underline{0001011}1   \\
		$13$   &  00011011  &  00011000    \\
		$14$   &  \underline{00011100}  &  00011001    \\
		$15$   &  \underline{00011101}  &  00011010     \\
		$16$   &  \underline{00011110}  &   00011011   \\
		$17$   &  \underline{00011111}  &  \underline{00011100}0    \\
		$18$   &  000010000  &  \underline{00011100}1    \\
		$19$   &  000010001  &  \underline{00011101}0   \\
		$20$   &  000010010  &  \underline{00011101}1    \\
		$21$   &  000010011  &  \underline{00011110}0    \\
		$22$   &  000010100  &  \underline{00011110}1   \\
		$23$   &  000010101  &  \underline{00011111}0     \\
		$24$   &  000010110  &  \underline{00011111}1   \\
		\hline
	\end{tabular}
\end{table}
Next, analyse the global metrics $K_{\Lambda}$ corresponding to $\lambda$ code. 
When $n=16$, we obtain
\[
\mathcal{L}(16,\lambda)=\dfrac{\sum_{t=1}^{16}L_{\lambda}(t)}{ 16\times4+2\times0 } = 1.59375.
\]
When $2\leq n<2048$, we can directly verify that $\mathcal{L}(n,\lambda)\leq 1.59375$.
When $n\geq 2048$ (i.e. $m\geq 11$), we have
\begin{equation*}  
	\begin{aligned}
		\sum_{t=1}^{n}L_{\lambda}(t)& \leq 174+\sum_{t=25}^{n}L_{\lambda}(t) \\
		&  \leq 174+4(n-24)+\frac{3}{2}\sum_{t=25}^{n}\lfloor \log_{2}(t-9) \rfloor         \\
		&  = 78+4n+\frac{3}{2}\sum_{t=16}^{n-9}\lfloor \log_{2}t \rfloor. \\
	\end{aligned}
\end{equation*}
According to the value of $l=n-2^m$, the following two cases are discussed.
\begin{enumerate}
	\item $l\geq 8$:

	When $m\geq 11$, we obtain
	\begin{equation*}  
		\begin{aligned}
	\sum_{t=16}^{n-9}\lfloor \log_{2}t \rfloor &  =\sum_{d=4}^{m-1}d\cdot2^{d}+ m(l-8)\\
			&  = (m-2)2^{m}-32+ml-8m . \\
		\end{aligned}
	\end{equation*}
	Thus, we have
	\begin{equation*}  
		\begin{aligned}
			\mathcal{L}(n,\lambda)&\leq \dfrac{78+4n+1.5\Big[(m-2)2^{m}-32+ml-8m\Big]}{nm+2l}       \\
			&= 1.5+\dfrac{n-12m+30}{nm+2l}       \\
			&< 1.5+\dfrac{n}{nm+2l}    \\
			&< 1.5+\dfrac{1}{m}    \\
			&< 1.59375. \\
		\end{aligned}
	\end{equation*}
	\item $1\leq l< 8$:
	
	When $m\geq 11$, we obtain
	\begin{equation*}  
	\begin{aligned}
		\sum_{t=16}^{n-9}\lfloor \log_{2}t \rfloor &  =\sum_{d=4}^{m-1}d\cdot2^{d}+ (m-1)(l-8)\\
		&  = (m-2)2^{m}-24+ml-8m-l . \\
	\end{aligned}
\end{equation*}
Thus, we have
\begin{equation*}  
	\begin{aligned}
		\mathcal{L}(n,\lambda)&\leq \dfrac{78+4n+1.5\Big[(m-2)2^{m}-24+ml-8m-l\Big]}{nm+2l}       \\
		&= 1.5+\dfrac{n-12m-1.5l+42}{nm+2l}       \\
		&< 1.5+\dfrac{n}{nm+2l}    \\
		&< 1.59375. \\
	\end{aligned}
\end{equation*}
\end{enumerate}
In summary, when $m\geq 11$, we obtain $\mathcal{L}(n,\lambda)< 1.59375$.
Therefore, the global metric corresponding to the $\lambda$ code is $K_{\Lambda}=1.59375$.
Further, we prove that the range of the global metric $K_{\Sigma}$ for the optimal 
$(2,\infty)$-threshold secret sharing scheme is $1.5\leq K_{\Sigma}\leq1.59375$.

\section{Conclusions}\label{sec_con}
In this paper, we propose a new metric $K_{\Sigma}$ for evolving $2$-threshold secret sharing schemes $\Sigma$ 
and study the range of the global metric $K_{\Sigma}$ for the optimal $(2,\infty)$-threshold secret sharing scheme.
First, we show that the global metric $K_{\Sigma}\geq 1.5$ and use the known universal coding of integers to 
construct schemes with good global metrics.
Second, we construct a new integer code, termed $\lambda$ code, to achieve the global metric $K_{\Lambda}=1.59375$. 
This work shows that the range of the global metric $K_{\Sigma}$ for the optimal 
$(2,\infty)$-threshold secret sharing scheme is $1.5\leq K_{\Sigma}\leq1.59375$.
Furthermore, the reachable lower bound of the sum of share sizes for $(2,n)$-threshold secret sharing schemes is proved.
Several issues remain as follows.
\begin{enumerate}
\item Is it possible to construct an integer code whose global metric is strictly less than $1.59375$?
\item The explicit value of $K_{\Sigma}$ of the optimal $(2,\infty)$-threshold secret sharing scheme is still unknown.
\end{enumerate}
\bibliographystyle{IEEEtran}
\bibliography{IEEEabrv,refs}
\end{document}